\documentclass[preprint,12pt]{elsarticle}
\usepackage[T1]{fontenc}
\usepackage{lmodern}
\usepackage{amsmath,amssymb,amsthm,mathtools}
\usepackage{microtype,booktabs,array,enumitem}
\usepackage{tikz}
\usetikzlibrary{arrows.meta,positioning,calc,fit,backgrounds}
\usepackage{needspace}
\usepackage{xurl}
\usepackage{hyperref}
\hypersetup{colorlinks=true,linkcolor=blue!45!black,citecolor=blue!45!black,urlcolor=blue!45!black,pdftitle={PSPACE-completeness of the Hunters and Rabbit game},pdfauthor={Walid BEN-AMEUR and Alessandro Maddaloni}}
\definecolor{stateblue}{RGB}{31,86,130}
\definecolor{guardgray}{RGB}{239,241,244}
\definecolor{badred}{RGB}{165,45,45}
\definecolor{testgreen}{RGB}{38,108,76}
\newtheorem{theorem}{Theorem}
\newtheorem{lemma}[theorem]{Lemma}
\newtheorem{proposition}[theorem]{Proposition}
\newtheorem{corollary}[theorem]{Corollary}
\newcommand{\PSPACE}{\mathsf{PSPACE}}
\newcommand{\NPSPACE}{\mathsf{NPSPACE}}
\newcommand{\q}{\boldsymbol q}
\newcommand{\s}{\boldsymbol s}
\newcommand{\cmd}{\Gamma}
\newcommand{\inv}[1]{\overline{#1}}
\newcommand{\test}{\#}
\setlist{itemsep=3pt,topsep=5pt}
\newcommand{\AItool}{ChatGPT (OpenAI)}
\journal{Information and Computation}
\date{13 October 2026}
\biboptions{numbers,sort&compress}
\begin{document}
\begin{frontmatter}
\title{PSPACE-completeness of the Hunters and Rabbit game}
\author{Walid Ben-Ameur}
\ead{walid.benameur@telecom-sudparis.eu}
\author{Alessandro Maddaloni\corref{corresponding}}
\ead{alessandro.maddaloni@telecom-sudparis.eu}
\cortext[corresponding]{Corresponding author.}
\affiliation{organization={SAMOVAR, T\'el\'ecom SudParis, Institut Polytechnique de Paris},
  addressline={19 place Marguerite Perey},
  city={Palaiseau},postcode={91120},country={France}}

\begin{abstract}
In the Hunters and Rabbit game, hunters shoot at a set of vertices at each round, while an invisible rabbit must move along an edge. We prove that deciding whether a given number of hunters can guarantee capture is PSPACE-complete, even on simple bipartite graphs. The reduction uses intersection nonemptiness for inverse finite automata: one common word must put all automata in accepting states simultaneously. A clique acting as a safe haven and two families of guards force every winning strategy to simulate common transitions. Existing transformations remove the restriction on the rabbit's initial positions and yield PSPACE-completeness for the directed game with one hunter, even on oriented graphs with positive minimum outdegree. Adjusting the guard sizes also proves PSPACE-completeness of the firefighting game, in which the rabbit may remain at its current vertex.
\end{abstract}
\begin{keyword}
Hunters and Rabbit \sep hunting number \sep firefighting \sep PSPACE-completeness \sep inverse automata
\end{keyword}
\end{frontmatter}

\section{Introduction}
The Hunters and Rabbit game is played on a finite graph by a team of hunters and an invisible rabbit. At each round, the hunters shoot at a set of vertices, with at most one shot per hunter. A surviving rabbit must then move along an edge. The hunters may choose arbitrary vertices at successive rounds, and receive no information about the rabbit's position before capture. A winning strategy guarantees capture for every initial position and every subsequent rabbit walk. The minimum number of hunters admitting such a strategy is the \emph{hunting number} $h(G)$. We study the decision problem that asks, given a graph $G$ and an integer $k$, whether $h(G)\leq k$.

Early studies of the one-hunter case appeared under other names. Britnell and Wildon~\cite{BritnellWildon} considered a prince searching for an invisible princess who changes rooms at every step, characterized the graphs on which capture is possible, and determined the optimal capture time. Haslegrave~\cite{Haslegrave} independently studied the corresponding evasion game and obtained a characterization and optimal strategies. These results provide a polynomial-time solution of the one-hunter problem on undirected graphs.

Abramovskaya, Fomin, Golovach and Pilipczuk~\cite{AFGP} developed the game with an arbitrary number of hunters. They proved the general bound $h(G)\leq\operatorname{pw}(G)+1$, where $\operatorname{pw}$ denotes pathwidth, determined the hunting number of an $r\times s$ grid as $\lfloor\min\{r,s\}/2\rfloor+1$, and established an $O(\log n)$ upper bound for $n$-vertex trees together with a nearly matching lower bound. Gruslys and M\'eroueh~\cite{GruslysMeroueh} obtained a logarithmic lower bound on trees, showing that this order of growth is best possible. Bolkema and Groothuis~\cite{BolkemaGroothuis} determined the hunting number of hypercubes and related the game to isoperimetric nesting properties.

Monotonicity provides a further connection with graph decompositions. Dissaux, Fioravantes, Gahlawat and Nisse~\cite{DFGN} introduced the monotone hunting number $mh(G)$ and proved $\operatorname{pw}(G)\leq mh(G)\leq\operatorname{pw}(G)+1$. They also established polynomial-time algorithms for this parameter on several graph classes and fixed-parameter tractability of the ordinary hunting number with respect to vertex cover number. Their examples show that allowing recontamination can be essential: for every $r\geq3$, there is a tree $T$ with $h(T)=2$ and $mh(T)=r$. Thus an argument based solely on monotone strategies does not capture the complexity of the ordinary game.

Ben-Ameur, Gahlawat and Maddaloni~\cite{BGMCOCOON} proved NP-hardness of computing the hunting number, even for simple bipartite graphs. Their extended journal version~\cite{BGM} proves that approximating $h(G)$ within additive error $O(n^{1-\varepsilon})$ is NP-hard for every fixed $\varepsilon>0$, where $n=|V(G)|$, and also establishes NP-completeness for prescribed polynomially bounded time limits of at least two rounds. The unrestricted problem belongs to PSPACE, but their concluding remarks leave open whether it is PSPACE-complete.

For directed graphs, Ben-Ameur and Maddaloni~\cite{BM} proved NP-hardness already for one hunter. They also gave a polynomial-time algorithm for the one-hunter problem on tournaments and established fixed-parameter tractability results under structural parameterizations. Their earlier work~\cite{BMPathwidth} gives $h(D)\leq\operatorname{dpw}(D)+1$, where $\operatorname{dpw}$ denotes directed pathwidth. For monotone strategies, whose territories are nonincreasing under inclusion, the minimum number of hunters lies between $\operatorname{dpw}(D)$ and $\operatorname{dpw}(D)+1$.

Allowing the invisible fugitive to remain at its current vertex gives the inspection game studied by Bernshteyn and Lee~\cite{BernshteynLee}. They call the minimum number of vertices inspected per round the \emph{inspection number}, establish bounds in terms of graph structure, and study its behaviour under subdivisions. The same game has a firefighting interpretation: all vertices are initially burning, firefighters extinguish a bounded number of vertices per round, and fire then spreads from the remaining burning vertices, possibly reigniting vertices just extinguished. Althoetmar, Schade and Sch\"urenberg~\cite{Firefighting} proved NP-hardness for this model and constructed instances requiring superpolynomially long shortest winning strategies, with a transformation transferring these results to Hunters and Rabbit. The absence of a polynomial bound on strategy length explains why NP membership cannot be inferred simply by taking a complete shooting sequence as a certificate.

The precise movement and information rules matter when comparing these results with the broader literature; see Fomin and Thilikos~\cite{FominThilikos} for a survey. In the classical Cops and Robbers game, studied by Nowakowski and Winkler~\cite{NowakowskiWinkler} and Quilliot~\cite{Quilliot}, the robber is visible and both players move along graph edges, with the option of staying at their current vertices. Kinnersley~\cite{Kinnersley} proved that deciding whether $k$ cops suffice is EXPTIME-complete when $k$ is part of the input. The hunters in our game can instead relocate their shots arbitrarily, while the rabbit is invisible and must move after every unsuccessful round.

Other graph-searching models allow searchers to be placed or removed at arbitrary vertices, while the fugitive has unbounded speed: it can traverse an arbitrarily long path that avoids the vertices blocked by searchers. This path-constrained motion differs from unrestricted relocation. Such models connect pursuit games with treewidth and pathwidth~\cite{FominThilikos}. In particular, Seymour and Thomas~\cite{SeymourThomas} characterized treewidth through a game with a visible fugitive of unbounded speed. Monotonicity theorems, including LaPaugh's result that recontamination does not help in classical graph searching~\cite{LaPaugh}, are central to this connection.

PSPACE-completeness is known for some related models. Akhoondian Amiri, Kaiser, Kreutzer, Rabinovich and Siebertz~\cite{AmiriEtAl} proved that deciding DAG-width is PSPACE-complete. In its game formulation, cops may relocate arbitrarily, the visible robber follows directed paths avoiding the cops that remain in place, and the cops must use a robber-monotone strategy: a vertex made unreachable must not become reachable again. Kreutzer and Ordyniak~\cite{KreutzerOrdyniak} proved PSPACE-completeness for the invisible domination game even with two cops. There the robber has unbounded speed and a cop controls its closed neighborhood, rather than only its own vertex. These results concern different movement, capture or monotonicity rules and do not imply the corresponding classification for Hunters and Rabbit.

We settle the PSPACE-completeness question for the ordinary Hunters and Rabbit game, with the number of hunters given as part of the input.
\begin{theorem}\label{thm:main}
Deciding whether $h(G)\leq k$ is PSPACE-complete, even when $G$ is a simple bipartite graph.
\end{theorem}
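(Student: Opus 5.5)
Membership in $\PSPACE$ is the easy half, and I would dispatch it first. A strategy is determined by the hunters' knowledge state, which is the set $Z\subseteq V(G)$ of vertices that may still contain the rabbit (the contaminated territory). It starts as $Z=V(G)$. A shot $S$ with $|S|\le k$ replaces $Z$ by $N(Z\setminus S)$, and the hunters win exactly when some sequence of shots reaches $Z=\emptyset$. This is a reachability question in a graph with $2^{|V|}$ nodes, each of polynomial size. It can be decided by guessing shots one at a time while storing only the current $Z$, which is in $\NPSPACE=\PSPACE$ by Savitch's theorem.

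For hardness I would reduce from intersection nonemptiness for inverse finite automata, which the abstract indicates is $\PSPACE$-complete. Here we are given automata $A_1,\dots,A_m$ over a common alphabet $\cmd$, each letter acting as a permutation of the states, together with an initial state $\q_i$ and an accepting set in each automaton. The question is whether some word sends every $A_i$ simultaneously into an accepting state. Reversibility is what makes the translation possible: the rabbit's contaminated territory must evolve as a forward image, and permutations keep that image of controlled size.

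The gadget I would build has the following parts.
\begin{enumerate}
\item A state layer for each automaton, with one vertex per state (suitably duplicated). The intended invariant is that, in every automaton except a tracked token, all states are cleared, so that the contaminated set encodes the current tuple of states.
\item A large clique that acts as a safe haven. It can never be cleared unless a huge budget is spent, and it reinfects everything adjacent to it.
\item A family of letter gadgets. Shooting the designated pattern for $\gamma\in\cmd$ moves contamination from state $s$ to state $\gamma(s)$, simultaneously in all automata.
\item Two families of guards, given by blocks of sizes chosen so that $k$ exactly covers the ``legal'' shots. Any deviation, such as mixing two letters or failing to protect a state layer, lets the rabbit reach the clique, after which capture is impossible.
\item A final test vertex $\test$, which lets the hunters clear the clique only when all tracked positions are accepting.
\end{enumerate}
Bipartiteness is obtained by subdividing, or by doubling each round into two parity classes; in particular the clique would be replaced by a bipartite dense substitute, such as a large complete bipartite graph, which still cannot be swept with $k$ shots.

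The proof then splits into two directions. Completeness: from an accepting word I would exhibit an explicit round-by-round shot schedule and verify the invariant. Soundness: I would show that every winning strategy, at every round before the test, must shoot exactly one letter pattern. Otherwise a guard is left open and the haven becomes reachable. Hence the sequence of contaminated sets is forced to simulate a common word, and the test succeeds only if that word is accepting.

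The main obstacle is soundness. Strategies may recontaminate, the hunters may use shots non-uniformly, and they may try to clear several candidate states at once. I would first prove a ``budget tightness'' lemma: the guard sizes force every round to spend all $k$ shots on guards in a rigid pattern. Only after that lemma would I show that the remaining freedom is precisely the choice of one letter per round.
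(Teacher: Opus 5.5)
Your overall architecture matches the paper's: a reduction from inverse-automata intersection, a clique as safe haven, guard sets forcing rigid shots, a test command, and a final bipartization. Your membership argument is fine. The hardness part, however, has two genuine gaps.

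\textbf{The starting position.} You play the \emph{unrestricted} game from the start, so the rabbit may begin inside the haven. A clique on at least $k+2$ vertices can never be cleared by $k$ hunters: a surviving clique vertex has at least $k+1$ clique neighbors, and one of them survives the next shot. Hence every instance your reduction outputs is a no-instance, and your test vertex cannot ``let the hunters clear the clique''.

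The missing idea is to prove hardness first for the restricted-start problem $h_S(G)\leq k$. The paper takes $S=Z\cup X^0(\s)$, where $Z$ is an independent set of exactly $k$ vertices joined to all of $U$. This forces $W_1=Z$ and $R_1=X^0(\s)$, and $U$ never enters the territory of a winning strategy. Capture does not come from clearing $U$. It comes from a class of good test vertices whose neighborhood $C_{\test}\cup X^j(\q)$ has size exactly $k$.

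The restriction is removed only afterwards, by the transformation of Ben-Ameur, Gahlawat and Maddaloni (three looped cliques of sizes $N-k$, $k$, $2k$). It gives $h_S(G)\leq k\iff h(H)\leq N+k$. Bipartiteness then comes from the double cover $B_H$, for which $h(B_H)=h(H)$; this is the correct form of your ``parity doubling''. For subdivision, you would still have to prove that it preserves the relevant threshold.

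\textbf{The source problem.} You misstate it. In an inverse automaton each letter acts as a \emph{partial} injection, and requesting an undefined transition rejects. This partiality is what makes the problem hard. If every letter acted as a permutation, intersection nonemptiness would lie in $\mathsf{NP}$. One would guess a target in each $F_i$ and solve the resulting pointwise transporter problem in the permutation group generated by the letters, by Schreier--Sims. A reduction from that problem would give hardness only if $\mathsf{NP}=\PSPACE$.

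Accordingly, your letter gadget, which merely moves contamination from $s$ to $\gamma(s)$, has no mechanism making the choice of a letter undefined in some component fatal. In the paper this role is played by the following vertices, all joined to $U$:
\begin{itemize}
\item the bad transition vertices $y^{0,\mathrm{bad}}_{i,c,q}$ for an undefined successor;
\item the bad transition vertices $y^{1,\mathrm{bad}}_{i,c,r}$ for a missing predecessor;
\item the bad test vertices for non-accepting states.
\end{itemize}
They work together with the two sides $X^0,X^1$. Traversing a transition path backwards reads $\inv c$, which is why inverse closure is needed in an undirected graph.

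Finally, the rigidity does not come from spending all $k$ shots on guards. A legal round shoots either $5m$ guards plus $4m$ of the $5m$ transition or test neighbors, or $8m$ guards plus $m$ of the $2m$ states. What forces this is Lemma~\ref{lem:forbidden}: a territory mixing sides, commands, or states with transition vertices exposes more than $k$ guards, and a surviving guard leads into $U$. The counting in Lemma~\ref{lem:simulation} then pins down the shots.
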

The proof reduces intersection nonemptiness for inverse finite automata, a PSPACE-complete problem~\cite{BirgetMargolis}, to the existence of a winning hunting strategy. The construction simulates a common word in all automata; inverse transitions allow this simulation to use an undirected graph.

We also obtain PSPACE-completeness for the directed game with one hunter, even on oriented graphs of positive minimum outdegree (Corollary~\ref{cor:directed}), and for the firefighting game, equivalently for deciding whether the inspection number is at most a given integer (Theorem~\ref{thm:firefighting}).

Section~2 gives the preliminaries and the polynomial-space upper bound. Section~3 proves the hardness results.

\section{Preliminaries and membership in PSPACE}
Graphs may contain loops in intermediate constructions, but have no parallel edges; a simple graph has no loops. For a graph $G=(V,E)$ and $v\in V$, define its neighborhood by $N_G(v)=\{u\in V:uv\in E\}$. Thus $v\in N_G(v)$ exactly when $v$ has a loop, which allows the rabbit to remain at $v$. For $R\subseteq V$, write $N_G(R)=\bigcup_{v\in R}N_G(v)$ and $N_G[R]=R\cup N_G(R)$, omitting the subscript when unambiguous. A strategy using $k$ hunters is a sequence $(W_t)_{t\geq1}$ with $W_t\subseteq V$ and $|W_t|\leq k$. The hunters know that the rabbit starts in a specified set $S\subseteq V$, but not its exact initial vertex. Its possible positions immediately after the shots at round $t$ form the \emph{rabbit territory} $R_t$, where
\begin{equation}\label{eq:territory}
R_1=S\setminus W_1,\qquad R_{t+1}=N_G(R_t)\setminus W_{t+1}\quad(t\geq1).
\end{equation}
By induction, $R_t$ is exactly the set of positions attainable after surviving the first $t$ shots.

The strategy is winning if $R_T=\varnothing$ for some finite $T$. We denote the minimum required number of hunters by $h_S(G)$; thus $h(G)=h_V(G)$. The input to the restricted-start decision problem is $(G,S,k)$, and the question is whether $h_S(G)\leq k$. For a digraph $D=(V,A)$, put $N_D^+(v)=\{u\in V:(v,u)\in A\}$ and $N_D^+(R)=\bigcup_{v\in R}N_D^+(v)$, and use $N_D^+(R)$ in place of $N_G(R)$. A rabbit unable to move loses. All graphs used below for hardness have no isolated vertices, and all resulting digraphs have positive minimum outdegree.

The territory accounts for all surviving rabbit walks simultaneously, so an empty territory certifies capture against every walk. Since the hunters receive no observations before capture, their strategy is a sequence of shot sets.

We recall the standard territory argument, including the exponential bound on capture time from~\cite[Proposition~1]{AFGP}.
\begin{lemma}\label{lem:space}
Deciding whether $h_S(G)\leq k$ belongs to $\PSPACE$, for undirected graphs and digraphs.
\end{lemma}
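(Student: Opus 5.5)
The plan is to view the game as reachability in an exponentially large but implicitly described graph of territories. Then Savitch's theorem ($\NPSPACE=\PSPACE$) finishes the argument.

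\textbf{State graph.} Its vertex set is the family of all subsets $R\subseteq V$, each written in $n=|V|$ bits. There is a directed edge from $R$ to $R'$ whenever some $W\subseteq V$ with $|W|\leq k$ gives $R'=N(R)\setminus W$; for digraphs, use $N^+_D(R)$ instead. The initial states are the sets $S\setminus W_1$ with $|W_1|\leq k$. By the territory recursion~\eqref{eq:territory}, a winning strategy with $k$ hunters exists if and only if some initial state reaches $\varnothing$ in this graph. One direction reads a path off the sequence of territories of a winning strategy. Conversely, a path, together with the shot sets labelling its edges, is a strategy whose territories are exactly the states on the path.

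\textbf{Length bound.} We may assume the path is simple: if a territory repeats, cutting out the cycle gives a shorter winning strategy, because the future evolution depends only on the current territory. The path therefore has length at most $2^n$, which matches the exponential capture-time bound of~\cite[Proposition~1]{AFGP}.

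\textbf{Nondeterministic algorithm.} The algorithm stores the current territory $R$ and a counter up to $2^n$, which needs $n+1$ bits. It first guesses $W_1$ with $|W_1|\leq k$ and sets $R=S\setminus W_1$. At each step it guesses $W$ with $|W|\leq k$, computes $N(R)\setminus W$ in polynomial time and space, increments the counter, and accepts if $R=\varnothing$. It rejects once the counter exceeds $2^n$. All stored objects are subsets of $V$ or $O(n)$-bit numbers, so the problem lies in $\NPSPACE$. Savitch's theorem, or a direct recursive midpoint search on the state graph using $O(n^2)$ space, then puts it in $\PSPACE$.

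\textbf{Main obstacle.} There is no real obstacle. The only subtle point is that a winning strategy may need superpolynomially many rounds~\cite{Firefighting}, so it cannot be written down as a certificate. The algorithm handles this by keeping only the current territory and a binary counter, never the whole strategy.
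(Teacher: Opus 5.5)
Your proposal is correct and is essentially the paper's proof. You remove repeated territories from a shortest winning sequence to bound its length by $2^n$, guess shot sets nondeterministically while storing only the current territory and an $(n+1)$-bit counter, and conclude with Savitch's theorem; the state-graph reachability framing repackages the same argument, and your algorithm also covers the paper's separate case $S=\varnothing$, since an empty territory stays empty.
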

\begin{proof}
Put $n=|V|$. If $S=\varnothing$, accept immediately. Otherwise, let $W_1,\ldots,W_T$ be a shortest winning sequence, if one exists. Its nonempty territories $R_1,\ldots,R_{T-1}$ are pairwise distinct: if $R_a=R_b$ with $a<b<T$, deleting $W_{a+1},\ldots,W_b$ preserves the winning suffix, since future evolution depends only on the current territory. Hence $T-1\leq2^n-1$, and $T\leq2^n$.

A nondeterministic machine chooses one set $W_t$ of size at most $k$ at each round and updates the territory using~\eqref{eq:territory}. It accepts as soon as the territory is empty, and rejects the current branch if this has not occurred after $2^n$ rounds. The current territory, shot set and temporary neighborhood each require $n$ bits; the binary round counter requires $n+1$ bits. A neighborhood is computed by scanning the input edges, and previous shot sets are discarded. Thus every branch halts and uses polynomial space on every input. An accepting branch exists exactly when a winning sequence exists, by the length bound. Therefore the problem is in $\NPSPACE=\PSPACE$ by Savitch's theorem~\cite{Savitch}. The same argument uses the out-neighborhood for digraphs.
\end{proof}

\Needspace{10\baselineskip}
Let $\Sigma=\{a,b,\inv a,\inv b\}$, with $\inv{\inv a}=a$ and $\inv{\inv b}=b$.
The bar denotes a formal inverse symbol which is part of the input alphabet. An \emph{inverse finite automaton} is a tuple $A_i=(Q_i,\Sigma,\delta_i,s_i,F_i)$, where $Q_i$ is a finite state set, $s_i\in Q_i$ is the initial state, $F_i\subseteq Q_i$ is the set of accepting states, and $\delta_i:Q_i\times\Sigma\rightharpoonup Q_i$ is a partial transition function satisfying
\begin{equation}\label{eq:inverse}
\delta_i(q,c)=r\quad\Longleftrightarrow\quad\delta_i(r,\inv c)=q
\qquad(q,r\in Q_i,\ c\in\Sigma).
\end{equation}
In particular, for each symbol $c$, the map $\delta_i(\cdot,c)$ is injective on its domain, and its inverse is $\delta_i(\cdot,\inv c)$. A transition is \emph{partial}: a symbol may be unavailable at a state. If an unavailable transition is requested, the word is not accepted.

For a finite word $w=c_1\cdots c_\ell$, begin at $q_i^0=s_i$ and recursively set $q_i^t=\delta_i(q_i^{t-1},c_t)$. The word belongs to $L(A_i)$ exactly when all these transitions are defined and $q_i^\ell\in F_i$. The empty word $\varepsilon$ is accepted exactly when $s_i\in F_i$.

\smallskip
\begingroup
\setlength{\fboxsep}{8pt}
\noindent\fbox{%
\begin{minipage}{\dimexpr\linewidth-2\fboxsep-2\fboxrule\relax}
\textbf{Inverse-automata intersection nonemptiness.}\smallskip\\
\textbf{Input:} a list $A_1,\ldots,A_m$ of inverse finite automata over the common alphabet $\Sigma$, where $m\geq1$ is part of the input.\\[2pt]
\textbf{Question:} does there exist one finite word $w\in\Sigma^*$ such that $\delta_i(s_i,w)$ is defined and belongs to $F_i$ for every $i=1,\ldots,m$?
Equivalently, is $\bigcap_{i=1}^m L(A_i)\neq\varnothing$?
\end{minipage}%
}
\endgroup
\smallskip

The same word is read by every automaton, one symbol at a time. Thus we seek one time $\ell$ when \emph{all} current states are accepting. It is insufficient that each automaton can accept some word of its own, or that each visits an accepting state at a different prefix of a common word.

It is useful to view the problem in the product state space $Q_1\times\cdots\times Q_m$. Write $\s=(s_1,\ldots,s_m)$ and, for $\q=(q_1,\ldots,q_m)$, define $\delta_c(\q)=(\delta_1(q_1,c),\ldots,\delta_m(q_m,c))$ only when every component transition is defined. A tuple is \emph{reachable} if some common word takes $\s$ to it, and \emph{accepting} if it belongs to $F_1\times\cdots\times F_m$. We seek a reachable accepting tuple. The product may have exponentially many tuples, although one tuple has a polynomial-size encoding.

Birget and Margolis~\cite[Theorem~3.1]{BirgetMargolis} prove PSPACE-completeness of intersection \emph{emptiness} for inverse finite automata over the fixed alphabet $\Sigma=\{a,b,\inv a,\inv b\}$, even with a single accepting state per automaton. Since PSPACE is closed under complement, intersection \emph{nonemptiness} is also PSPACE-complete. We allow arbitrary $F_i$ in the notation; hardness already holds when each $F_i$ is a singleton.

\section{PSPACE-completeness}
\subsection{Construction}
We first restrict the rabbit's possible initial positions. A simulated configuration consists of one state from each automaton. Two copies of the state sets and undirected transition paths encode a common letter: traversing a path backwards reads the inverse letter. Guard sets, together with a clique in which the rabbit can survive indefinitely, force complete state tuples and a common command to be preserved until an accepting tuple is tested. Test vertices allow capture precisely when all component states are accepting.

Fix inverse automata $A_1,\ldots,A_m$ as above. We construct a simple undirected graph $G$ and an initial subset of vertices $S$. All vertices and sets introduced below are distinct.

For every $i$ and $q\in Q_i$, introduce two \emph{state vertices} $x^0_{i,q}$ and $x^1_{i,q}$. The superscript is called the \emph{side}. For $j\in\{0,1\}$, put
\begin{equation}\label{eq:tuple}
X^j(\q)=\{x^j_{i,q_i}:1\leq i\leq m\}.
\end{equation}
For each $i$ and $c\in\Sigma$, introduce \emph{transition vertices} as follows.
\begin{itemize}
\item If $\delta_i(q,c)=r$ is defined, introduce a vertex $y_{i,c,q}$ adjacent to $x^0_{i,q}$ and $x^1_{i,r}$; its command is $c$.
\item If $\delta_i(q,c)$ is undefined, introduce a \emph{bad transition vertex} $y^{0,\mathrm{bad}}_{i,c,q}$ of command $c$, with sole state neighbor $x^0_{i,q}$.
\item For every $r$ outside the image of $\delta_i(\cdot,c)$, introduce a bad transition vertex $y^{1,\mathrm{bad}}_{i,c,r}$ of command $c$, with sole state neighbor $x^1_{i,r}$.
\end{itemize}
For each command $c\in\Sigma$, every state vertex has exactly one transition neighbor, good or bad. On side~$0$, the neighbor of $x^0_{i,q}$ is $y_{i,c,q}$ if $\delta_i(q,c)$ is defined, and $y^{0,\mathrm{bad}}_{i,c,q}$ otherwise. On side~$1$, if $r=\delta_i(q,\inv c)$ is defined, then~\eqref{eq:inverse} gives $\delta_i(r,c)=q$, and $r$ is the unique predecessor of $q$ under $c$. The neighbor of $x^1_{i,q}$ is then $y_{i,c,r}$; otherwise, it is $y^{1,\mathrm{bad}}_{i,c,q}$. Every good transition vertex has exactly one state neighbor on each side, and every bad transition vertex has exactly one state neighbor in total. For a defined transition $\delta_i(q,c)=q'$, the good transition vertex $y_{i,c,q}$ lies on the path $x^0_{i,q}-y_{i,c,q}-x^1_{i,q'}$. Traversing this path from side~$0$ to side~$1$ simulates the automaton transition $q\xrightarrow{c}q'$. Traversing it from side~$1$ to side~$0$ simulates the inverse transition $q'\xrightarrow{\inv c}q$, since $\delta_i(q',\inv c)=q$.

Introduce a new command symbol $\test\notin\Sigma$ and put $\cmd=\Sigma\cup\{\test\}$. This is a command of the \emph{graph construction}, not an input letter of the automata; no transition $\delta_i(\cdot,\test)$ is defined. For $c\in\Sigma$, the vertices of command $c$ are the good and bad transition vertices indexed by $c$. The vertices of command $\test$ are all test vertices. Each state vertex $x^j_{i,q}$ receives a private \emph{test vertex} $y^j_{i,\test,q}$ of command $\test$, with sole state neighbor $x^j_{i,q}$. Mark it bad exactly when $q\notin F_i$. Each state vertex now has four transition neighbors and one test neighbor, one for each command in $\cmd$. The test command $\test$ will be used to check whether all automata are simultaneously in accepting states.

Introduce a clique $U=K_{9m+2}$, two independent guard sets $P_0,P_1$ of size $5m$, and an independent guard set $C_c$ of size $8m$ for each $c\in\cmd$. Join every state vertex on side $j$ to all of $P_j$, and every transition or test vertex of command $c$ to all of $C_c$. Join every guard, every bad transition vertex and every bad test vertex to all of $U$. Finally introduce an independent set $Z$ of size $9m$, join it completely to $U$, and set
\begin{equation}\label{eq:start}
S=Z\cup X^0(\s).
\end{equation}
There are no other edges. The transition and test vertices are specific to their automaton, state and command indices; different automata never share them. The sets $U,P_0,P_1$ and each $C_c$ are shared across the construction. Figure~\ref{fig:construction} displays every edge type. All these objects are ordinary vertices or sets of vertices of the same simple undirected graph.

\begin{figure}[p]
\centering
\resizebox{.96\linewidth}{!}{%
\begin{tikzpicture}[x=1cm,y=1cm,font=\small,
 box/.style={draw,rounded corners=1pt,fill=guardgray,minimum height=6mm,inner sep=4pt,align=center},
 state/.style={circle,draw=stateblue,fill=stateblue!5,minimum size=8mm,inner sep=1pt},
 vertex/.style={circle,draw,minimum size=8mm,inner sep=1pt},
 bad/.style={vertex,draw=badred,fill=badred!8,font=\small},
 edgejoin/.style={line width=.65pt},
 danger/.style={edgejoin,draw=badred},
 title/.style={anchor=west,font=\small\bfseries}]
\path[use as bounding box](-.15,-18.35) rectangle(15.5,.4);
\node[title] at (0,0) {(a) Common guard structure and forced initialization};
\node[box] (u) at (7.5,-.85) {$U=K_{9m+2}$};
\node[box] (z) at (13,-.85) {$Z$\quad $9m$};
\draw[edgejoin] (u)--(z);
\foreach \name/\lab/\xx in {p0/P_0/0.8,ca/C_a/3.1,cb/C_b/5.3,cai/C_{\inv a}/7.5,cbi/C_{\inv b}/9.7,ch/C_{\test}/11.9,p1/P_1/14.2}{
 \node[box] (\name) at (\xx,-2.1) {$\lab$};
 \draw[edgejoin] (u.south)--(\name.north);
}
\node[font=\scriptsize] at (.8,-2.65) {$5m$};
\node[font=\scriptsize] at (14.2,-2.65) {$5m$};
\node[font=\scriptsize] at (7.5,-2.7) {Each of the five command guard sets has $8m$ vertices.};
\node[anchor=west,font=\small] at (0,-3.2) {$S=Z\cup X^0(\s)$; the first shot is forced: $W_1=Z$, $R_1=X^0(\s)$.};

\node[title] at (0,-4.0) {(b) Defined transition: $\delta_i(q,c)=r$, with $c\in\Sigma$};
\node[box] (ub) at (7.5,-4.75) {$U$};
\node[box] (pb0) at (2.1,-5.75) {$P_0$};
\node[box] (cbc) at (7.5,-5.75) {$C_c$};
\node[box] (pb1) at (12.9,-5.75) {$P_1$};
\node[state] (xb0) at (2.1,-6.85) {$x^0_{i,q}$};
\node[vertex] (ppb) at (7.5,-6.85) {$y_{i,c,q}$};
\node[state] (xb1) at (12.9,-6.85) {$x^1_{i,r}$};
\foreach \a/\b in {ub/pb0,ub/cbc,ub/pb1,pb0/xb0,cbc/ppb,pb1/xb1,xb0/ppb,ppb/xb1}{\draw[edgejoin](\a)--(\b);}
\node[font=\scriptsize] at (7.5,-7.55) {Side $0\to1$ reads $c$; side $1\to0$ reads $\inv c$. The transition vertex has no edge to $U$.};

\node[title] at (0,-8.45) {(c) Undefined successor};
\node[title] at (8,-8.45) {(d) Missing predecessor};
\node[anchor=west,font=\small] at (0,-9.0) {$\delta_i(q,c)$ undefined};
\node[anchor=west,font=\small] at (8,-9.0) {$r\notin\operatorname{im}\delta_i(\cdot,c)$};
\node[box] (uc) at (3.4,-9.65) {$U$};
\node[box] (pc0) at (1,-10.65) {$P_0$};
\node[box] (ccc) at (5.4,-10.65) {$C_c$};
\node[state] (xc) at (1,-11.9) {$x^0_{i,q}$};
\node[bad] (pc) at (5.4,-11.9) {$y^{0,\mathrm{bad}}_{i,c,q}$};
\foreach \a/\b in {uc/pc0,uc/ccc,pc0/xc,ccc/pc,xc/pc}{\draw[edgejoin](\a)--(\b);}
\draw[danger](uc.south)--(pc.north west);
\node[box] (ud) at (11.4,-9.65) {$U$};
\node[box] (pd1) at (9,-10.65) {$P_1$};
\node[box] (ccd) at (13.4,-10.65) {$C_c$};
\node[state] (xd) at (9,-11.9) {$x^1_{i,r}$};
\node[bad] (pd) at (13.4,-11.9) {$y^{1,\mathrm{bad}}_{i,c,r}$};
\foreach \a/\b in {ud/pd1,ud/ccd,pd1/xd,ccd/pd,xd/pd}{\draw[edgejoin](\a)--(\b);}
\draw[danger](ud.south)--(pd.north west);
\node[font=\scriptsize] at (7.5,-12.8) {Each bad transition vertex has one state neighbor, its command guards, and all of $U$.};

\node[title] at (0,-13.55) {(e) Private test vertex: one for every state vertex};
\node[box] (ue) at (7.5,-14.3) {$U$};
\node[box] (pe) at (3,-15.3) {$P_j$};
\node[box] (ce) at (11.5,-15.3) {$C_{\test}$};
\node[state] (xe) at (3,-16.5) {$x^j_{i,q}$};
\node[vertex,draw=testgreen,fill=testgreen!6] (ze) at (11.5,-16.5) {$y^j_{i,\test,q}$};
\foreach \a/\b in {ue/pe,ue/ce,pe/xe,ce/ze,xe/ze}{\draw[edgejoin](\a)--(\b);}
\draw[danger,dashed](ue.south)--node[pos=.64,below left,font=\scriptsize,text=badred]{$q\notin F_i$ only}(ze.north west);
\node[anchor=west,font=\small] at (0,-17.4) {The test vertex is good if $q\in F_i$, and bad (joined to $U$) if $q\notin F_i$.};
\node[anchor=west,font=\small] at (0,-18.05) {$\test\notin\Sigma$: the test is a graph command, not an automaton input symbol.};
\end{tikzpicture}%
}
\caption{All edge types of the construction. Panel (a) shows the shared sets $U,P_j,C_c$; panels (b)--(e) show local connections. Circles denote vertices and boxes denote vertex sets; lines to boxes represent complete adjacency. The set $U$ induces a clique; the other boxed sets are independent. Transition and test vertices are specific to their displayed indices. Red lines join bad vertices to $U$; the dashed line is present exactly when $q\notin F_i$. All edges are undirected.}\label{fig:construction}
\end{figure}

We now consider the Hunters and Rabbit game on $G$, with the rabbit initially restricted to $S=Z\cup X^0(\s)$ and at most $k=9m$ hunters. The following lemmas show how the shot budget and guard sets force the simulation.
\begin{lemma}\label{lem:forbidden}
In a winning strategy using at most $k$ hunters, no territory contains a vertex of $U$, a guard, a bad transition vertex, a bad test vertex, or a vertex of $Z$. Moreover, no territory contains state vertices on both sides, transition/test vertices of two different commands, or both a state vertex and a transition or test vertex.
\end{lemma}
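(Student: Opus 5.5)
The plan is a chain of ``contamination'' arguments. Each one shows that a forbidden vertex in some territory $R_t$ forces $R_{t+1}=N(R_t)\setminus W_{t+1}$ to contain a vertex that is already known to be forbidden. The only tool is counting: $|W_{t+1}|\leq k=9m$, so whenever $N(R_t)$ contains more than $9m$ vertices of a given set, some vertex of that set survives into $R_{t+1}$.

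\textbf{Step 1: $U$.} Suppose $u\in U\cap R_t$. Since $U$ is a clique, $N(u)\supseteq U\setminus\{u\}$, which has $9m+1>9m$ vertices. Hence $R_{t+1}$ again contains a vertex of $U$. By induction every later territory is nonempty, so the strategy is not winning.

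\textbf{Step 2: neighbors of $U$.} Every guard (in $P_0$, $P_1$ or some $C_c$), every bad transition vertex, every bad test vertex and every vertex of $Z$ is joined to all of $U$. If such a vertex $v$ lies in $R_t$, then $N(R_t)\supseteq U$, which has $9m+2>9m$ vertices. So $R_{t+1}$ meets $U$, which is impossible by Step~1. This settles the first sentence of the lemma. In particular, for $t=1$ it forces $W_1\supseteq Z$, because $R_1=S\setminus W_1$ must avoid $Z$.

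\textbf{Step 3: the mixed configurations.} Each case produces more than $9m$ guard neighbors, so $R_{t+1}$ would contain a guard, contradicting Step~2.
\begin{itemize}
\item If $R_t$ contains state vertices on both sides, then $N(R_t)\supseteq P_0\cup P_1$, of size $10m$.
\item If $R_t$ contains transition or test vertices of two distinct commands $c\neq d$, then $N(R_t)\supseteq C_c\cup C_d$, of size $16m$.
\item If $R_t$ contains a state vertex on side $j$ and a transition or test vertex of command $c$, then $N(R_t)\supseteq P_j\cup C_c$, of size $13m$.
\end{itemize}
These sets are pairwise disjoint by construction.

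No step is a real obstacle. The only point needing care is the order of the argument: Step~1 must come first, since Steps~2 and~3 derive their contradictions from it. One should also note that disjointness of the guard sets, and the sizes $5m$, $8m$ and $9m+2$ against the budget $9m$, are exactly what make every inequality strict.
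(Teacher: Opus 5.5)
Your proof is correct and follows the paper's argument essentially step for step. The clique $U$ propagates itself because $|U\setminus\{u\}|=9m+1>k$. Every vertex joined to all of $U$ is then forbidden, and the three mixed configurations expose $10m$, $16m$ and $13m$ pairwise disjoint guards, each exceeding $k=9m$.
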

\begin{proof}
If $R_t$ contains a vertex of $U$, then $|N(R_t)\cap U|\geq k+1$, so a vertex of $U$ survives the next shot. Inductively, capture is impossible. Every guard, bad transition vertex, bad test vertex, and vertex of $Z$ is adjacent to all $k+2$ vertices of $U$; if one survives a shot, a vertex of $U$ survives the next one. These vertices are therefore forbidden in every territory of a winning strategy.

State vertices on both sides would make all of $P_0\cup P_1$ reachable in the next round. Since $|P_0\cup P_1|=10m>k$, a guard would survive. Similarly, vertices of distinct commands $c,d$ would make the $16m>k$ guards of $C_c\cup C_d$ reachable. Finally, a state vertex on side $j$ together with a transition or test vertex of command $c$ would make all guards of $P_j\cup C_c$ reachable. These sets are disjoint and have total size $5m+8m=13m>k$, so a guard would again survive. Each case contradicts the first part of the lemma. Thus every nonempty territory consists entirely of state vertices on one side or entirely of transition or test vertices of one command.
\end{proof}

\Needspace{11\baselineskip}
For a tuple $\q$ and $j\in\{0,1\}$, let $Y_c^j(\q)$ be the set of vertices of command $c$ adjacent to $X^j(\q)$, and put $Y^j(\q)=\bigcup_{c\in\cmd}Y_c^j(\q)$. These are transition vertices when $c\in\Sigma$ and test vertices when $c=\test$. For $c\in\Sigma$, the construction gives
\begin{gather*}
Y_c^0(\q)=Y_c^1\bigl(\delta_c(\q)\bigr)
\quad\text{if $\delta_c(\q)$ is defined},\\
Y_c^1(\q)=Y_c^0\bigl(\delta_{\inv c}(\q)\bigr)
\quad\text{if $\delta_{\inv c}(\q)$ is defined}.
\end{gather*}
The same transition vertices are viewed from their two sides: their command remains $c$, while traversal from side~$1$ to side~$0$ reads $\inv c$. The vertices in $Y_c^j(\q)$ are distinct for different automata, and none has two state neighbors on the same side. Hence
\begin{equation}\label{eq:commandsizes}
|Y_c^j(\q)|=m\quad\text{for every }c\in\cmd,\qquad |Y^j(\q)|=5m.
\end{equation}

\begin{lemma}\label{lem:simulation}
Suppose a winning strategy using at most $k$ hunters has $R_t=X^j(\q)$. Then $R_{t+1}=Y_c^j(\q)$ for one command $c\in\cmd$, and this set contains no bad transition vertex or bad test vertex. If $c\in\Sigma$, the territory $R_{t+2}$ is either $X^j(\q)$ or $X^{1-j}(\q')$, where
\[
\q'=\begin{cases}\delta_c(\q),&j=0,\\ \delta_{\inv c}(\q),&j=1.\end{cases}
\]
If $c=\test$, the tuple $\q$ is accepting and the hunters can capture at round $t+2$.
\end{lemma}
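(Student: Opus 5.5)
The approach is to compute neighborhoods explicitly and then use Lemma~\ref{lem:forbidden} to pin down each territory. Suppose $R_t=X^j(\q)$. Every state vertex on side $j$ is adjacent to all of $P_j$ and to exactly one vertex of each command in $\cmd$. Hence
\[
N(R_t)=P_j\cup Y^j(\q),
\]
and $|Y^j(\q)|=5m$ by~\eqref{eq:commandsizes}.

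Lemma~\ref{lem:forbidden} forbids guards in $R_{t+1}$, so all $5m$ vertices of $P_j$ must be shot at round $t+1$. At most $k-5m=4m$ shots then remain for $Y^j(\q)$. Fewer than $m$ vertices of $Y^j(\q)$ can be shot without leaving a survivor, since $R_{t+1}=\varnothing$ already ends the game. By Lemma~\ref{lem:forbidden}, $R_{t+1}$ lies within a single command class $Y^j_c(\q)$, so all other classes are shot entirely. That costs $4m$ shots, which exhausts the budget, so no vertex of $Y^j_c(\q)$ is shot and $R_{t+1}=Y^j_c(\q)$. (If $R_{t+1}$ were empty the claim is vacuous; I would note that case separately, or observe that an empty $R_{t+1}$ would need $9m+m>k$ shots.)

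Next, $R_{t+1}$ contains no bad vertex by the first part of Lemma~\ref{lem:forbidden}. For $c\in\Sigma$, every vertex of $Y^j_c(\q)$ is therefore a good transition vertex, so each component transition of $\q$ is defined. Take $\q'=\delta_c(\q)$ if $j=0$ and $\q'=\delta_{\inv c}(\q)$ if $j=1$. By the displayed identities $Y_c^0(\q)=Y_c^1(\delta_c(\q))$ and its inverse, the neighbors of $R_{t+1}$ are
\[
N(R_{t+1})=C_c\cup X^j(\q)\cup X^{1-j}(\q').
\]
At round $t+2$ all $8m$ guards of $C_c$ must be shot, leaving only $m$ shots. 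Lemma~\ref{lem:forbidden} forbids state vertices on both sides, so one side must be entirely shot, which uses exactly $m$ shots since $|X^j(\q)|=|X^{1-j}(\q')|=m$. The other side then survives whole, giving $R_{t+2}\in\{X^j(\q),X^{1-j}(\q')\}$.

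For $c=\test$, the test vertices in $R_{t+1}$ are good, meaning $q_i\in F_i$ for all $i$, so $\q$ is accepting. The only neighbors of a good test vertex are its private state vertex and $C_\test$. The hunters can therefore shoot $C_\test\cup X^j(\q)$ at round $t+2$, which is $9m=k$ shots, and capture.

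The main obstacle is the counting at round $t+1$. It requires $Y^j(\q)$ to consist of exactly $5m$ distinct vertices with exactly $m$ per command, which is what~\eqref{eq:commandsizes} supplies.
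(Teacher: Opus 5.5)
Your proof is correct and follows the paper's argument essentially step for step. You compute $N(X^j(\q))=P_j\cup Y^j(\q)$, shoot all of $P_j$, use Lemma~\ref{lem:forbidden} to force one whole command class to survive the remaining $4m$ shots, and repeat the count with $C_c$ (resp.\ $C_{\test}$) at round $t+2$. Two sentences need fixing: the one about ``fewer than $m$ vertices'' is garbled, and an empty $R_{t+1}$ would make the statement false rather than ``vacuous''. Neither affects the argument, since your budget count already forces $Y_c^j(\q)\subseteq R_{t+1}$, and $|P_j\cup Y^j(\q)|=10m>k$ excludes emptiness anyway.
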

\begin{proof}
The neighborhood of the tuple is
\begin{equation}\label{eq:statenb}
N(X^j(\q))=P_j\cup Y^j(\q).
\end{equation}
By Lemma~\ref{lem:forbidden}, the hunters must shoot all $5m$ guards of $P_j$. At most $4m$ shots remain for the $5m$ transition and test vertices, leaving at least $m$ of them. They all have one command $c$, by the same lemma. Since each command class has size $m$, the entire class survives and all other classes are shot. Thus
\begin{equation}\label{eq:select}
W_{t+1}=P_j\cup\bigl(Y^j(\q)\setminus Y_c^j(\q)\bigr),\qquad R_{t+1}=Y_c^j(\q).
\end{equation}
This uses exactly $9m$ shots. By Lemma~\ref{lem:forbidden}, all vertices in the surviving territory $Y_c^j(\q)$ are good.

If $c\in\Sigma$, the represented transition is consequently defined in every component. Its resulting tuple $\q'$ satisfies
\begin{equation}\label{eq:transitionnb}
N(Y_c^j(\q))=C_c\cup X^j(\q)\cup X^{1-j}(\q').
\end{equation}
The two tuple sets each contain $m$ vertices and are disjoint, since they lie on different sides. All $8m$ guards of $C_c$ must be shot, leaving at most $m$ shots for the $2m$ states. At least $m$ states survive, and they must all lie on one side by Lemma~\ref{lem:forbidden}. Thus one whole tuple is shot and the other whole tuple survives.

If $c=\test$, all vertices in $Y_{\test}^j(\q)$ must be good, so $q_i\in F_i$ for every $i$. Their neighborhood is $C_{\test}\cup X^j(\q)$, of size $8m+m=k$, and shooting this set captures the rabbit at round $t+2$.
\end{proof}

For $c\in\Sigma$, the territory may move from $X^j(\q)$ to $Y_c^j(\q)$ and then return to $X^j(\q)$. In this case, each possible rabbit walk traverses the same edge in both directions. These two rounds leave the tuple unchanged and simply delay the simulation. To simulate a word $w=d_1\cdots d_\ell$, we start on side~$0$ and change sides after each letter. At odd positions $r$, we move from side~$0$ to side~$1$ using command $d_r$. At even positions $r$, we move from side~$1$ to side~$0$ using command $\inv{d_r}$: traversing the corresponding transition paths backwards reads $d_r$.

\Needspace{12\baselineskip}
For example, consider two inverse automata with transitions
\[
\begin{array}{ll}
A_1:&q_0\xrightarrow{a}q_1\xrightarrow{b}q_2\xrightarrow{a}q_3,\\[3pt]
A_2:&r_0\xrightarrow{a}r_1\xrightarrow{b}r_2\xrightarrow{a}r_0,
\end{array}
\]
together with their inverse transitions; all other transitions are undefined. Their initial states are $q_0$ and $r_0$, and their accepting sets are $F_1=\{q_3\}$ and $F_2=\{r_0\}$. Both accept $aba$, with successive tuples $\q^0=(q_0,r_0)$, $\q^1=(q_1,r_1)$, $\q^2=(q_2,r_2)$ and $\q^3=(q_3,r_0)$. Here $m=2$ and $k=18$. The three letters are simulated by commands $a,\inv b,a$, followed by the test command $\test$. Table~\ref{tab:example} gives the resulting winning strategy, using $18$ hunters in every round.

\begin{table}[htbp]
\centering\small
\renewcommand{\arraystretch}{1.18}
\begin{tabular}{@{}c l l@{}}
\toprule
Round $t$ & Shot set $W_t$ & Territory $R_t$ after the shot\\
\midrule
1 & $Z$ & $X^0(\q^0)$\\
2 & $P_0\cup\bigl(Y^0(\q^0)\setminus Y_a^0(\q^0)\bigr)$ & $Y_a^0(\q^0)$\\
3 & $C_a\cup X^0(\q^0)$ & $X^1(\q^1)$\\
4 & $P_1\cup\bigl(Y^1(\q^1)\setminus Y_{\inv b}^1(\q^1)\bigr)$ & $Y_{\inv b}^1(\q^1)$\\
5 & $C_{\inv b}\cup X^1(\q^1)$ & $X^0(\q^2)$\\
6 & $P_0\cup\bigl(Y^0(\q^2)\setminus Y_a^0(\q^2)\bigr)$ & $Y_a^0(\q^2)$\\
7 & $C_a\cup X^0(\q^2)$ & $X^1(\q^3)$\\
8 & $P_1\cup\bigl(Y^1(\q^3)\setminus Y_{\test}^1(\q^3)\bigr)$ & $Y_{\test}^1(\q^3)$\\
9 & $C_{\test}\cup X^1(\q^3)$ & $\varnothing$\\
\bottomrule
\end{tabular}
\caption{A winning strategy for the common word $aba$.}
\label{tab:example}
\end{table}

\begin{proposition}\label{prop:reduction}
The construction has polynomial size and satisfies
\begin{equation}\label{eq:reduction}
h_S(G)\leq9m\quad\Longleftrightarrow\quad\bigcap_{i=1}^{m}L(A_i)\neq\varnothing.
\end{equation}
\end{proposition}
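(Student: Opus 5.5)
Three things need checking: the construction has polynomial size, a common accepted word yields a winning strategy with $9m$ hunters, and a winning strategy yields a common accepted word.

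\emph{Size.} Each of the $2\sum_i|Q_i|$ state vertices receives one transition or test neighbor per command. Bad vertices are bounded by the state count times $|\cmd|$. The guard sets $U$, $P_0$, $P_1$, $C_c$ and $Z$ have size $O(m)$. Hence the vertex count is $O(\sum_i|Q_i|+m)$ and the edge count is quadratic in it, so the construction is computable in polynomial time.

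\emph{Sufficiency.} Let $w=d_1\cdots d_\ell$ lie in every $L(A_i)$, and let $\q^0=\s,\q^1,\ldots,\q^\ell$ be the successive product tuples. I would generalize Table~\ref{tab:example}. Round~1 shoots $W_1=Z$, which leaves $R_1=X^0(\s)$: the vertices of $Z$ are shot, and $X^0(\s)\setminus Z=X^0(\s)$. Then, for $r=1,\ldots,\ell$, two rounds move the territory from $X^{j}(\q^{r-1})$ to $X^{1-j}(\q^r)$, where $j=(r-1)\bmod 2$. The command is $c=d_r$ when $j=0$ and $c=\inv{d_r}$ when $j=1$. The shot sets are those in~\eqref{eq:select} and $C_c\cup X^j(\q^{r-1})$. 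By~\eqref{eq:statenb}, the first shot leaves exactly $Y_c^j(\q^{r-1})$. These vertices are all good, because the transition is defined in every component. When $j=1$ this uses~\eqref{eq:inverse}: $\delta_i(q^{r-1}_i,\inv{d_r})$ is defined and equals $q^r_i$. By~\eqref{eq:transitionnb}, the second shot leaves exactly $X^{1-j}(\q^r)$. Each shot set has size $5m+4m$ or $8m+m$, so at most $9m$ hunters are used. At the end the territory is $X^{j}(\q^\ell)$ with $\q^\ell$ accepting. Shooting $P_j\cup(Y^j(\q^\ell)\setminus Y^j_\test(\q^\ell))$ leaves only good test vertices, and shooting $C_\test\cup X^j(\q^\ell)$ empties the territory. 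The case $\ell=0$ is included.

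\emph{Necessity.} Take a winning strategy with at most $9m$ hunters. By Lemma~\ref{lem:forbidden}, the first shot must contain all of $Z$, and $|Z|=9m$, so $W_1=Z$ and $R_1=X^0(\s)$. I would then argue by induction using Lemma~\ref{lem:simulation}. As long as the territory is nonempty, every odd-indexed territory has the form $X^j(\q)$ with $\q$ reachable. The invariant is: $\q$ is reached from $\s$ by a common word $u$, and $j\equiv|u|\pmod 2$ after discarding the return steps. A return step ($R_{t+2}=X^j(\q)$) keeps both $\q$ and $u$. A side change with command $c$ sends $\q$ to $\delta_c(\q)$ if $j=0$ and to $\delta_{\inv c}(\q)$ if $j=1$. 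This appends the letter $c$ or $\inv c$ to $u$. The letter is a genuine symbol of $\Sigma$, and all component transitions are defined because the surviving transition vertices are good. Since the strategy is winning, the territory eventually becomes empty. An even-indexed territory $Y_c^j(\q)$ with $c\in\Sigma$ is never empty and never directly leads to $\varnothing$, since at least $m$ states survive. Also $X^j(\q)$ cannot lead to $\varnothing$, because at least $m$ command vertices survive. So the only exit is the test command, which by Lemma~\ref{lem:simulation} forces the current reachable $\q$ to be accepting. Therefore $u\in\bigcap_i L(A_i)$.

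The main obstacle is this last step. One must make sure the territory never becomes empty through some other route, such as a sub-tuple or a partially shot set. Lemma~\ref{lem:simulation} already pins $R_{t+1}$ and $R_{t+2}$ exactly, so I would lean on it to exclude these cases. The only point left to verify is that the bookkeeping of parity and inverse letters composes correctly into a single common word.
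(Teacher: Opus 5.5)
Your proposal is correct and follows essentially the same route as the paper. Both arguments use the forced first shot $W_1=Z$, the alternating strategy built from \eqref{eq:select} and $C_c\cup X^j(\q^{r-1})$ and finished by the test, and an induction via Lemma~\ref{lem:simulation} showing that only the test command can empty the territory; you argue this last part directly where the paper argues by contradiction, and you have one slip: for $j=1$ the needed fact is $\delta_i(q_i^{r-1},\inv c)=\delta_i(q_i^{r-1},d_r)=q_i^r$, equivalently $\delta_i(q_i^r,\inv{d_r})=q_i^{r-1}$ by \eqref{eq:inverse}, not $\delta_i(q_i^{r-1},\inv{d_r})=q_i^r$ as written.
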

\begin{proof}
Every winning strategy with at most $k=9m$ hunters must shoot all of $Z$ in the first round, by Lemma~\ref{lem:forbidden}. Since $|Z|=k$, this forces $W_1=Z$ and $R_1=X^0(\s)$.

Suppose a word $w=d_1\cdots d_\ell$ is accepted by all automata, and let $\q^r=\delta_{d_r}(\q^{r-1})$ for $r=1,\ldots,\ell$, with $\q^0=\s$. Start with $W_1=Z$. At step $r$, let $j=(r-1)\bmod2$ and choose $c_r=d_r$ if $j=0$, and $c_r=\inv{d_r}$ if $j=1$. Set
\[
\begin{aligned}
W_{2r}&=P_j\cup\bigl(Y^j(\q^{r-1})\setminus Y_{c_r}^j(\q^{r-1})\bigr),\\
W_{2r+1}&=C_{c_r}\cup X^j(\q^{r-1}).
\end{aligned}
\]
By~\eqref{eq:transitionnb}, these shots leave $Y_{c_r}^j(\q^{r-1})$ and then $X^{1-j}(\q^r)$. In particular, when $j=1$, the effective letter is $\inv{c_r}=d_r$. Each shot uses exactly $k$ hunters. After all letters, select the test command and shoot $C_{\test}$ together with the accepting tuple. This wins in $2\ell+3$ rounds. For the empty word, apply the test immediately after the first round.

Conversely, suppose the intersection is empty and a winning strategy with at most $k$ hunters exists. After the forced first shot, we have $R_1=X^0(\s)$. If $R_t=X^j(\q)$ and $\q$ is reachable from $\s$ by a common word, then $\q$ is not accepting. Lemma~\ref{lem:simulation} therefore excludes the test command: the next pair of rounds either returns to the same tuple or applies a common letter and changes sides. The resulting tuple is again reachable. Inductively, every odd-round territory contains $m$ states and every even-round territory contains $m$ transition vertices. The territory never becomes empty, a contradiction.

For the size bound, let $n=\sum_i|Q_i|$. There are $2n$ state vertices. For fixed $i,c$, a partial injection with $t$ defined transitions contributes $t$ good transition vertices and $2(|Q_i|-t)$ bad transition vertices. The four transition commands therefore contribute at most $8n$ transition vertices, and the test command contributes $2n$ test vertices. Including the guards, $Z$, and $U$ gives
\[
|V(G)|\leq 12n+68m+2.
\]
Since $m\leq n$, the construction has $O(n)$ vertices and $O(n^2)$ edges and is computable in polynomial time.
\end{proof}

\subsection{Unrestricted and directed games}
\begin{proof}[Proof of Theorem~\ref{thm:main}]
Membership in PSPACE follows from Lemma~\ref{lem:space}. For hardness, apply Proposition~\ref{prop:reduction}, and write $N=|V(G)|$ and $k=9m$. We have $1\leq k\leq|S|$. The construction in the proof of~\cite[Proposition~4.2]{BGM} adds to $G$ three looped cliques $A,B,C$ of sizes $N-k,k,2k$, respectively, completely joins $A$ to $B,C,S$, and completely joins $B$ to $V(G)$. Figure~\ref{fig:finalreductions} illustrates this construction. Since $A\cup C$ induces a complete graph with a loop at every vertex and has $(N-k)+2k=N+k$ vertices, $h(H)\geq N+k$. The equivalence proved in~\cite[Proposition~4.2]{BGM} therefore gives
\begin{equation}\label{eq:initialtransform}
h_S(G)\leq k\quad\Longleftrightarrow\quad h(H)=N+k\quad\Longleftrightarrow\quad h(H)\leq N+k.
\end{equation}
Construct a graph $B_H$ by replacing each vertex $v$ of $H$ with two vertices $v^0,v^1$, and each edge $uv$ with the edges $u^0v^1$ and $u^1v^0$. A loop at $v$ gives the single edge $v^0v^1$. By~\cite[Lemma~3.2]{BGM}, $h(B_H)=h(H)$. The graph $B_H$ is simple and bipartite, and both transformations have polynomial size. Together with~\eqref{eq:reduction} and~\eqref{eq:initialtransform}, this proves the result. We also record that $B_H$ has no isolated vertices and, writing $K=N+k$, has $4K$ vertices, with $K\geq3$.
\end{proof}

\begin{corollary}\label{cor:directed}
Deciding whether $h(D)\leq k$ is PSPACE-complete for digraphs, even for symmetric digraphs with bipartite underlying graphs. Deciding whether one hunter wins is PSPACE-complete even for oriented graphs with positive minimum outdegree.
\end{corollary}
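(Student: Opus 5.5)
Membership in PSPACE for both statements is Lemma~\ref{lem:space}, which already covers digraphs. For the first hardness claim, start from the simple bipartite graph $B_H$ built in the proof of Theorem~\ref{thm:main}. Replace each undirected edge $uv$ by the two arcs $(u,v)$ and $(v,u)$, giving the symmetric digraph $\overleftrightarrow{B_H}$. Then $N^+(R)=N_{B_H}(R)$ for every $R$, so the recursions~\eqref{eq:territory} agree and the hunting numbers coincide. The underlying graph is $B_H$ itself, which is bipartite, and the minimum outdegree is positive because $B_H$ has no isolated vertices. Hence $h(B_H)\leq K$ is equivalent to $h(\overleftrightarrow{B_H})\leq K$, and hardness follows from Theorem~\ref{thm:main}.

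For one hunter on oriented graphs, I would invoke the known transformation (from~\cite{BM}) that converts $k$ hunters on a digraph $D$ into one hunter on a blown-up digraph $D'$, with $h(D)\leq k$ if and only if $h(D')\leq1$. The idea is to slow time down by a factor $k$. Each vertex $v$ is replaced by a directed chain, or small gadget, of length about $k$. The rabbit's move along an arc of $D$ then becomes a forced walk of $k$ steps through the gadget. During those $k$ rounds, a single hunter can shoot the $k$ vertices a round of $k$ hunters would have shot, one per layer, exploiting that all rabbits advance synchronously. Concretely, I would take $D'$ with vertex set $V\times\{1,\dots,k\}$, arcs $(v,i)\to(v,i+1)$ for $i<k$, and arcs $(v,k)\to(u,1)$ for $(v,u)\in A$.

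A territory of $D'$ confined to a single layer then corresponds to a territory of $D$. Shooting $(w_\ell,\ell)$ at the round when the rabbits occupy layer $\ell$ emulates shooting $w_\ell$, so $W=\{w_1,\dots,w_k\}$ is simulated over $k$ rounds. The converse needs care, because the rabbit may start in any layer, so territories are not layer-synchronized. The standard fix is to observe that the layer index increases deterministically modulo $k$. A strategy on $D'$ therefore decomposes into $k$ interleaved phases, and any winning one-hunter strategy projects to a winning $k$-hunter strategy on $D$ (reading the rounds grouped by the phase of each rabbit). The same grouping gives the forward direction with unrestricted starting sets: the hunter simply handles all phases at once.

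The remaining requirement is that $D'$ be \emph{oriented}, with no pair of opposite arcs. Chains have none. An arc $(v,k)\to(u,1)$ paired with its opposite would require $k=1$, and $k=K\geq3$ here. Loops from looped cliques have already disappeared in $B_H$. Positive minimum outdegree holds because every $(v,i)$ with $i<k$ has a successor, and $(v,k)$ inherits the positive outdegree of $v$ in $\overleftrightarrow{B_H}$. I expect the main obstacle to be the converse direction of the one-hunter transformation with arbitrary initial layers. If this needs more than the phase-decomposition argument, I would cite the precise lemma of~\cite{BM} rather than re-prove it, and only verify that its output is oriented with positive outdegree using $K\geq3$.
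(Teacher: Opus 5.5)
Your proof follows the paper's route: the symmetric digraph of $B_H$ for the first assertion, then the vertex-subdivision reduction $F^{[b]}$ of~\cite[Theorem~4.3]{BM} with $b=K\geq3$, then the check that $F^{[b]}$ is oriented with positive minimum outdegree. The paper also records $|V(F)|=4b$. This gives both the hypothesis $b\leq|V(F)|$ of the cited theorem and polynomial size, which you leave implicit. Its orientation argument is also shorter than your case split: arcs go from level $i$ to level $i+1$ cyclically, so a loop or an opposite pair would force $b\in\{1,2\}$.

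Your optional sketch of the cited equivalence, which the paper does not reprove, puts the difficulty in the wrong direction. The implication $h(D')\leq1\Rightarrow h(D)\leq k$ is immediate: follow only the rabbits starting in layer~$1$ and group the rounds into blocks of $k$.

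It is the forward implication that must handle all starting layers, and there the hunter cannot handle all phases at once. At each round the $k$ phases occupy $k$ distinct layers, so a single shot affects only one phase. Instead, clear the phases one after another, rerunning the $k$-hunter strategy of $D$ aligned with the blocks of the current phase. This works because a strategy winning from $V$ wins from every subset, and an emptied per-phase territory stays empty.

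Since you would ultimately cite~\cite{BM}, as the paper does, this does not affect the validity of your proof.
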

\begin{proof}
Replacing every edge of an undirected graph by two opposite arcs preserves its hunting number. Thus Theorem~\ref{thm:main} gives the first assertion.

For the second, use the vertex-subdivision reduction of Ben-Ameur and Maddaloni~\cite[Theorem~4.3]{BM}. For a digraph $F$ with positive minimum outdegree and $1\leq b\leq|V(F)|$, replace every vertex $v$ by a directed path $v^1,\ldots,v^b$, and replace each arc $uv$ by $u^bv^1$. Denote the resulting digraph by $F^{[b]}$. The cited reduction gives
\begin{equation}\label{eq:subdivision}
h(F)\leq b\quad\Longleftrightarrow\quad h(F^{[b]})\leq1.
\end{equation}
Apply it to the symmetric digraph $F$ obtained from $B_H$ in the preceding proof, with $b=K$. Here $b\geq3$ and $|V(F)|=4b$, so the construction is polynomial. Every arc of $F^{[b]}$ goes from one level to the next, cyclically through the $b$ levels. Since $b\geq3$, there are no loops or oppositely directed pairs of arcs; thus $F^{[b]}$ is oriented. Its minimum outdegree is positive, so $h(F^{[b]})\geq1$. Equation~\eqref{eq:subdivision} proves hardness for one hunter. Membership in PSPACE follows from Lemma~\ref{lem:space}.
\end{proof}

\begin{figure}[htbp]
\centering
\resizebox{.9\linewidth}{!}{%
\begin{tikzpicture}[x=1cm,y=1cm,font=\small,
 block/.style={draw,rounded corners=2pt,minimum width=1.9cm,minimum height=1.0cm,align=center,fill=white},
 vert/.style={circle,draw,fill=white,minimum size=7mm,inner sep=1pt},
 internal/.style={-{Stealth[length=2mm]},line width=.8pt,draw=stateblue},
 inherited/.style={-{Stealth[length=2mm]},line width=.8pt,draw=badred,dashed},
 fulljoin/.style={line width=1.1pt},
 lab/.style={anchor=west,font=\small\bfseries}]
\path[use as bounding box](-.2,-5.1) rectangle(10.4,.35);
\node[lab] at (0,0) {Removing the restriction to the initial set $S$};
\fill[testgreen!7,rounded corners=7pt] (.0,-.55) rectangle (3.4,-4.05);
\node[block,draw=testgreen] (C) at (1.7,-1.25) {$C$\\$K^{\circ}_{2k}$};
\node[block,draw=testgreen] (A) at (1.7,-3.35) {$A$\\$K^{\circ}_{N-k}$};
\node[block] (B) at (6.2,-1.25) {$B$\\$K^{\circ}_{k}$};
\draw[fulljoin,draw=testgreen] (A)--(C);
\draw[fulljoin] (A.north east)--(B.south west);
\draw[rounded corners=4pt,fill=stateblue!3] (4.8,-2.4) rectangle (10.2,-4.1);
\node[font=\scriptsize,anchor=east] at (10.05,-3.90) {$G$, $|V(G)|=N$};
\node[block,minimum height=.7cm,minimum width=1.35cm,fill=white] (S) at (6.2,-3.35) {$S$};
\node[block,minimum height=.7cm,minimum width=1.65cm,fill=white] (rest) at (8.85,-3.35) {$V(G)\setminus S$};
\draw[fulljoin] (A.east)--(S.west);
\draw[fulljoin] (B.south)--(S.north);
\draw[fulljoin] (B.east) to[out=0,in=90] (rest.north);
\node[align=center,text=testgreen!70!black,font=\small] at (5.1,-4.7) {The shaded sets induce $H[A\cup C]=K^{\circ}_{N+k}$,\\so $h(H)\geq N+k$.};

\end{tikzpicture}%
}
\caption{Removing the restriction to the initial set $S$, using the construction of Ben-Ameur, Gahlawat and Maddaloni~\cite[Proposition~4.2]{BGM}. Thick lines represent complete joins, and all added joins are shown. Edges within $G$ are retained but not drawn. The notation $K^{\circ}_t$ denotes a clique of order $t$ with a loop at each vertex. The shaded sets induce a looped clique of order $N+k$.}
\label{fig:finalreductions}
\end{figure}

\subsection{Firefighting}
In the firefighting game of Althoetmar, Schade and Sch\"urenberg~\cite{Firefighting}, all vertices are initially burning. At each round, at most $k$ vertices are extinguished, after which fire spreads from every remaining burning vertex to its neighbors; the remaining burning vertices also stay burning. The goal is to extinguish the whole graph. Write $\operatorname{ffn}(G)$ for the minimum required number of firefighters. Equivalently, the fugitive in the Hunters and Rabbit game may remain at its current vertex, so $\operatorname{ffn}(G)$ is also the inspection number of Bernshteyn and Lee~\cite{BernshteynLee}.

To match the timing convention in~\eqref{eq:territory}, let $R_t$ denote the burning vertices immediately after extinguishing in round $t$, before fire spreads. Then
\[
R_1=V(G)\setminus W_1,\qquad R_{t+1}=N_G[R_t]\setminus W_{t+1}.
\]
The burning set after the subsequent spread is $N_G[R_t]$, which is empty exactly when $R_t$ is empty. Thus, if $G^\circ$ is obtained from $G$ by adding a loop at every vertex, then $\operatorname{ffn}(G)=h(G^\circ)$.

\begin{theorem}\label{thm:firefighting}
Deciding whether $\operatorname{ffn}(G)\leq k$ is PSPACE-complete for simple undirected graphs.
\end{theorem}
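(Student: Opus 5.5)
We would prove membership and hardness separately, reusing the Hunters and Rabbit machinery through the identity $\operatorname{ffn}(G)=h(G^\circ)$. Membership is immediate: Lemma~\ref{lem:space} applies verbatim to graphs with loops, so deciding $h(G^\circ)\leq k$, and hence $\operatorname{ffn}(G)\leq k$, is in $\PSPACE$. For hardness we would redo the restricted-start reduction of Proposition~\ref{prop:reduction} on the looped version of the construction. We would then remove the restriction to $S$ with a looped analogue of the construction of~\cite[Proposition~4.2]{BGM}. No bipartite step is needed, since the statement asks only for simple undirected graphs.

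\textbf{Step 1: the restricted-start reduction with loops.} In $G^\circ$ the territory evolves by $R_{t+1}=N[R_t]\setminus W_{t+1}$, so the rabbit may also stay. We keep the construction but change the sizes: $|P_j|=\alpha m$, $|C_c|=\beta m$, $k$ hunters, and $|U|=|Z|+2$ with $|Z|=k$. The analogue of Lemma~\ref{lem:forbidden} needs $2\alpha m>k$, $2\beta m>k$ and $(\alpha+\beta)m>k$.

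\begin{itemize}
\item From $R_t=X^j(\q)$, the closed neighborhood is $X^j(\q)\cup P_j\cup Y^j(\q)$. After shooting $P_j$ we want exactly $m$ of the remaining $6m$ vertices to survive, which requires $k=\alpha m+5m$. These survivors form one class: either $X^j(\q)$ itself (a harmless stay) or one $Y^j_c(\q)$.
\item From $R_t=Y^j_c(\q)$, the closed neighborhood is $C_c\cup Y^j_c(\q)\cup X^j(\q)\cup X^{1-j}(\q')$. After shooting $C_c$ we again want exactly $m$ survivors, which requires $k=\beta m+2m$.
\end{itemize}

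The choice $\alpha=6$, $\beta=9$, $k=11m$ satisfies all these conditions. With it, Lemma~\ref{lem:simulation} holds with the additional option of staying in place, and staying never creates new tuples. From a good test class $Y^j_{\test}(\q)$, the set $C_{\test}\cup Y^j_{\test}(\q)\cup X^j(\q)$ has size $9m+2m=k$, so capture is possible exactly at accepting tuples. The forced first shot is still $W_1=Z$, because $Z$ is adjacent to the clique $U$ of size $k+2$. The rest of the proof of Proposition~\ref{prop:reduction} then carries over.

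\textbf{Step 2: removing the restriction.} We would apply the construction of~\cite[Proposition~4.2]{BGM} to the looped graph $G^\circ$ with $N=|V(G)|$ and $k=11m$. This adds the looped cliques $A,B,C$ with the same joins, giving a graph $H$. We then check that its proof goes through for $H^\circ$, that is, with loops added at all vertices of $G$. Its key facts should only use that the looped clique $A\cup C$ of size $N+k$ forces $h\geq N+k$ and that $B$ absorbs the budget until the rabbit is confined to $S$. We expect to obtain $h_S(G^\circ)\leq k\iff h(H^\circ)\leq N+k$. Finally, let $H_0$ be the simple graph underlying $H^\circ$. Then $\operatorname{ffn}(H_0)=h(H^\circ)$, which completes the reduction, and all objects have polynomial size.

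\textbf{Main obstacle.} The step we expect to be hardest is Step~2. The cited proposition is stated for the ordinary game, so we must recheck that argument when every vertex of $G$ carries a loop. The delicate direction is the converse: a winning strategy on $H^\circ$ with $N+k$ hunters must spend its first round so that the surviving fire in $V(G)$ is contained in $S$. If the argument does not transfer directly, we would first try enlarging $C$ and $B$ by constant factors. That would force every winning strategy to shoot all of $V(G)\setminus S$ together with $A$ in a single initial round, after which exactly $k$ shots per round remain for the simulation in $G^\circ$.
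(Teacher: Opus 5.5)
Your proposal is correct and follows the paper's proof essentially step by step. It uses the same looped construction with $k=11m$, $|P_j|=6m$, $|C_c|=9m$, $|Z|=k$ and $|U|=k+2$, the same survival counting (one whole class of size $m$ survives, with staying in place as an extra harmless option), and the same final step: apply \cite[Proposition~4.2]{BGM}, then delete loops so that $\operatorname{ffn}(J)=h(H)$. The paper simply invokes that cited equivalence directly for the looped graph $\widetilde G$, so the re-verification and the fallback of enlarging $B$ and $C$ that you anticipate in Step~2 do not arise there.
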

\begin{proof}
Membership in PSPACE follows from Lemma~\ref{lem:space} applied to $G^\circ$.

For hardness, we adapt the inverse-automata simulation. Loops introduce a new possibility: a territory can retain vertices of its current class while acquiring vertices of a neighboring class. The guard sizes must therefore also rule out territories mixing state vertices with transition or test vertices. We choose them so that every state tuple and every good transition class has a neighborhood of size $k+m$, whereas a good test class has a neighborhood of size exactly $k$.

Use the same state, transition and test vertices and the same adjacencies as in the construction of $G$, with budget $k=11m$ and guard sizes $|P_0|=|P_1|=6m$ and $|C_c|=9m$ for every $c\in\cmd$. Set $|Z|=k$ and $|U|=k+2$, and add a loop at every vertex. Denote this graph by $\widetilde G$ and put $S=Z\cup X^0(\s)$.

The argument of Lemma~\ref{lem:forbidden} excludes vertices of $U$, guards, bad transition vertices, bad test vertices and vertices of $Z$ from every territory of a winning strategy using at most $k$ hunters. States on both sides would expose $12m>k$ guards, and vertices of two distinct commands would expose $18m>k$ guards. A state and a transition or test vertex would expose $|P_j|+|C_c|=15m>k$ guards. Thus each nonempty territory consists entirely of states on one side or entirely of transition or test vertices of one command.

Suppose the current territory is $X=X^j(\q)$. The loops give $N_{\widetilde G}(X)=P_j\cup X\cup Y^j(\q)$. All $6m$ guards must be shot, leaving at most $5m$ shots for the $6m$ vertices in $X\cup Y^j(\q)$. At least $m$ of these vertices survive. The preceding restrictions force the next territory to be either the whole set $X$ or one whole command class $Y_c^j(\q)$, whose vertices must all be good.

Now let $Y=Y_c^j(\q)$ be a good transition class, with $c\in\Sigma$, and put $X'=X^{1-j}(\q')$, where $\q'=\delta_c(\q)$ if $j=0$ and $\q'=\delta_{\inv c}(\q)$ if $j=1$. We have $N_{\widetilde G}(Y)=C_c\cup Y\cup X\cup X'$. Shooting the $9m$ guards leaves at most $2m$ shots for three disjoint sets of size $m$. At least $m$ vertices survive, and they must form exactly one of $Y$, $X$ and $X'$.

If $Y=Y_{\test}^j(\q)$ is a good test class, then every $q_i$ belongs to $F_i$. Its neighborhood is $C_{\test}\cup Y\cup X$, of size $9m+m+m=k$, so the hunters can capture in the next round.

The first shot is forced to be $Z$, leaving $X^0(\s)$. Before a test class is reached, the territory can remain unchanged, return from a transition class to the same tuple, or move to the opposite tuple by applying a common letter. Inductively, every tuple reached is reachable from $\s$ by a common word. A state tuple and a good transition class each have a neighborhood of size $12m>k$, so capture requires reaching a good test class and hence an accepting tuple. Conversely, a common accepted word gives a winning strategy by following its transitions and then applying the test: at each round, shoot all vertices of the neighborhood except those of the desired next territory. Each such shot uses $k$ hunters. Consequently,
\begin{equation}\label{eq:firefighting}
h_S(\widetilde G)\leq k\quad\Longleftrightarrow\quad\bigcap_{i=1}^m L(A_i)\neq\varnothing.
\end{equation}

Finally, put $N=|V(\widetilde G)|$ and apply the construction of~\cite[Proposition~4.2]{BGM} used in the proof of Theorem~\ref{thm:main}. Its equivalence~\eqref{eq:initialtransform} gives $h(H)\leq N+k$ if and only if $h_S(\widetilde G)\leq k$. Every vertex of $\widetilde G$ has a loop, and all vertices added by this construction also have loops. Thus, if $J$ is the simple graph obtained by deleting the loops of $H$, then $\operatorname{ffn}(J)=h(H)$. Both constructions have polynomial size, which completes the reduction.
\end{proof}

\section*{Declaration of generative AI and AI-assisted technologies}
During the preparation of this work, the authors used \AItool{} to assist with proof checking, literature searches, English language editing, and the preparation of LaTeX and TikZ code. The authors reviewed and edited the resulting material and take full responsibility for the content of the final manuscript.

\end{document}